\documentclass[11pt]{article}

\usepackage[a4paper,margin=1in]{geometry}
\usepackage{amsmath,amssymb,amsthm,array,booktabs,microtype}
\usepackage[hidelinks]{hyperref}
\hypersetup{
  pdftitle={Secret Sharing at the Shannon Ceiling},
  pdfauthor={Christopher Williamson},
  pdfsubject={Lower bounds for perfect secret-sharing schemes},
  pdfkeywords={secret sharing, access structures, entropy method, lower bounds}
}

\newtheorem{theorem}{Theorem}
\newtheorem{lemma}[theorem]{Lemma}
\newtheorem{corollary}[theorem]{Corollary}

\newcommand{\Zk}{\mathbb Z_k}
\newcommand{\cP}{\mathcal P}
\newcommand{\cG}{\mathcal G}
\newcommand{\cH}{\mathcal H}
\newcommand{\cC}{\mathcal C}
\newcommand{\cB}{\mathcal B}
\newcommand{\Av}{\operatorname{Av}}

\title{Secret Sharing at the Shannon Ceiling}
\author{
  Christopher Williamson\thanks{\texttt{cw@williamsonchris.com}}
}
\date{18 August 2026}

\begin{document}
\maketitle
\begin{abstract}
For every \(n\geq9\) that is a multiple of \(3\), we construct an explicit access
structure on \(n\) participants.  In every perfect secret-sharing scheme
realising this access structure, if \(S\) denotes the random secret, then the
sum of the share entropies is at least
\[
  \left(\frac{n^2}{9}+\frac{2n}{3}\right)H(S),
\]
and some participant has share entropy at least
\[
  \left(\frac{n}{6}+\frac12\right)H(S).
\]
After normalisation by \(H(S)\), these are respectively
\(\Omega(n^2)\) and \(\Omega(n)\) lower bounds and also give the same
asymptotic lower bounds on the total and largest expected binary lengths of the shares.  This improves by a logarithmic factor the longstanding general lower bounds of
\(\Omega(n^2/\log n)\) for total share size and
\(\Omega(n/\log n)\) for maximum share size due to Csirmaz.

The proof uses only elementary Shannon inequalities, together with some averaging
arguments.  The Shannon-information
method has universal \(O(n^2)\) and \(O(n)\) ceilings for the total and
maximum normalised entropy lower bounds it can certify, so our construction
reaches both ceilings up to constant factors.
\end{abstract}

\noindent\textbf{Keywords:} secret sharing; access structures; entropy method;
share-size lower bounds.

\section{Introduction}

A perfect secret-sharing scheme distributes a secret among a set of
participants so that every authorised coalition can reconstruct the secret,
whereas every unauthorised coalition has no information about it.  A central
problem is to determine how large the shares must be for the most difficult
access structures.

Csirmaz constructed explicit $n$-participant access structures for which one
share has size $\Omega(n/\log n)$ times the secret size
\cite{Csirmaz1997}.  His related total-share construction gives a lower bound
of $\Omega(n^2/\log n)$ \cite{Csirmaz1996}; this was still described as the
best general total-share lower bound in 2023 \cite{Beimel2023}.  We exhibit
an explicit family attaining linear maximum-share complexity and quadratic
total-share complexity.  To the best of our knowledge, these improve the
previously published general lower bounds by a logarithmic factor.

Applebaum and Nir observed that an explicit monotone formula of size \(L\)
on \(O(\sqrt L)\) variables with total secret-sharing complexity
\(\Omega(L)\) would remove a logarithmic loss from their hardness result
\cite[Remark~2.2]{ApplebaumNir2025}.  Our family has precisely these
parameters.

The access structure has three cyclically indexed classes of participants,
denoted $X_i,Y_i,h_i$.  Each helper $h_s$ selects a perfect matching between
the $X$- and $Y$-participants, and a coalition containing $h_s$ is authorised
when it contains an endpoint of every edge in that matching.  

The argument uses only the standard Shannon inequalities together with
perfect privacy and reconstruction.  It therefore applies to arbitrary
perfect schemes, without linearity or representability assumptions.

\section{Entropy identities and secret-sharing inequalities}

Let $\cP$ be a finite participant set.  A secret-sharing scheme consists of a
jointly distributed secret $S$ and shares $(Z_a)_{a\in\cP}$, with $H(S)>0$.
For a coalition $A\subseteq\cP$, write $Z_A=(Z_a)_{a\in A}$.  To reduce
notation, after this section we write $H(A)$ for $H(Z_A)$ and write a
participant label $a$ in place of its share $Z_a$.  Thus $Aa$ denotes
$A\cup\{a\}$, and juxtaposition denotes union of named coalitions. For random variables \(U_1,\ldots,U_m\), juxtaposition denotes their joint distribution; thus \(H(UW)=H(U,W)\), and more generally \(H(AUW)\) denotes the joint entropy of all random variables represented by \(A\), \(U\), and \(W\).

The scheme is \emph{perfect} for an access structure $\Gamma$ if
\[
 H(S\mid Z_A)=0 \quad(A\in\Gamma),
 \qquad
 I(S;Z_A)=0 \quad(A\notin\Gamma).
\]
For an unauthorised coalition, the second condition is equivalently
$H(S\mid Z_A)=H(S)$.

We use the following standard identities.  For random variables
$U,V,W$:
\begin{align}
 H(U\mid W)&=H(UW)-H(W),                                      \label{eq:cond-H}\\
 I(U;V\mid W)&=H(U\mid W)-H(U\mid VW)                         \label{eq:cmi-one}\\
 &=H(UW)+H(VW)-H(W)-H(UVW).                                    \label{eq:cmi-four}
\end{align}
The last line follows from the preceding two by expanding
\[
 H(U\mid W)-H(U\mid VW)
 =[H(UW)-H(W)]-[H(UVW)-H(VW)].
\]
We also use $H(U\mid W)\ge0$ and $I(U;V\mid W)\ge0$.

Relevant to secret sharing are the additional consequences:

\begin{lemma}\label{lem:dealer}
Let $A$ be unauthorised in a perfect scheme.
\begin{enumerate}
\item If $Aa$ is authorised, then
\[
 H(a\mid A)\ge H(S).
\]
\item If both $Aa$ and $Ab$ are authorised, where $a\ne b$, then
\[
 I(a;b\mid A)\ge H(S).
\]
\end{enumerate}
\end{lemma}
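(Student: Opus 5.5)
Both parts follow from the two perfectness conditions together with the chain rule for conditional entropy and mutual information, in the notation of \eqref{eq:cond-H}--\eqref{eq:cmi-four}. The main step is a single identity, used twice: if $A$ is unauthorised and $Aa$ is authorised, then $I(S;a\mid A)=H(S)$. Indeed, by \eqref{eq:cmi-one},
\[
I(S;a\mid A)=H(S\mid A)-H(S\mid Aa)=H(S)-0=H(S),
\]
since $H(S\mid A)=H(S)$ by privacy and $H(S\mid Aa)=0$ by reconstruction.

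For part~1, we bound this mutual information from above. By \eqref{eq:cmi-one} with the roles of $S$ and $a$ swapped,
\[
I(S;a\mid A)=H(a\mid A)-H(a\mid SA)\le H(a\mid A),
\]
using $H(a\mid SA)\ge0$. This gives $H(a\mid A)\ge H(S)$.

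For part~2, we expand $I(a;Sb\mid A)$ in two ways by the chain rule. First,
\[
I(a;Sb\mid A)=I(a;b\mid A)+I(a;S\mid Ab).
\]
Since $Ab$ is authorised, $S$ is a function of $Z_{Ab}$, so $I(a;S\mid Ab)\le H(S\mid Ab)=0$, and hence $I(a;Sb\mid A)=I(a;b\mid A)$. Second,
\[
I(a;Sb\mid A)=I(a;S\mid A)+I(a;b\mid SA)\ge I(a;S\mid A)=H(S),
\]
where the last equality is the identity above, applied because $Aa$ is authorised. Combining the two expansions gives $I(a;b\mid A)\ge H(S)$.

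The only point needing care is the chain rule for conditional mutual information, which this section has not stated explicitly. If it must be derived, the plan is to expand every term via \eqref{eq:cmi-four} into joint entropies and check that they telescope. For example,
\[
I(a;Sb\mid A)=H(aA)+H(SbA)-H(A)-H(aSbA).
\]
The vanishing of $I(a;S\mid Ab)$ can likewise be written as $H(SAb)=H(Ab)$ and $H(SAab)=H(Aab)$, both of which follow from $H(S\mid Ab)=0$. Everything else is routine.
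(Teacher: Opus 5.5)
Your proof is correct and is essentially the paper's argument. Part~1 is identical. In part~2, your two chain-rule expansions of $I(a;Sb\mid A)$ repackage the paper's identity $I(a;b\mid A)-I(a;b\mid AS)=H(S\mid A)-H(S\mid Aa)-H(S\mid Ab)+H(S\mid Aab)=H(S)$, which the paper derives by expanding into joint entropies, and you then discard $I(a;b\mid AS)\ge0$ exactly as the paper does. The only cosmetic difference is that you need just $H(S\mid Aab)\ge0$ where the paper invokes monotonicity to get $H(S\mid Aab)=0$.
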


\begin{proof}
For the first assertion,
\[
 H(a\mid A)\ge I(S;a\mid A)
   =H(S\mid A)-H(S\mid Aa)=H(S).
\]
Here the first inequality follows from
$I(S;a\mid A)=H(a\mid A)-H(a\mid S,A)\le H(a\mid A)$.

For the second assertion, we expand both conditional mutual informations:
\begin{align*}
&I(a;b\mid A)-I(a;b\mid A,S)\\
={}&
 \bigl(H(Aa)+H(Ab)-H(A)-H(Aab)\bigr)\\
&\quad-
 \bigl(H(ASa)+H(ASb)-H(AS)-H(ASab)\bigr)\\
={}&
 \bigl(H(AS)-H(A)\bigr)
 -\bigl(H(ASa)-H(Aa)\bigr)\\
&\quad
 -\bigl(H(ASb)-H(Ab)\bigr)
 +\bigl(H(ASab)-H(Aab)\bigr)\\
={}&
 H(S\mid A)-H(S\mid Aa)-H(S\mid Ab)+H(S\mid Aab).
\end{align*}
Because \(A\) is unauthorised, perfect privacy gives
\(H(S\mid A)=H(S)\).  Because \(Aa\) and \(Ab\) are authorised,
correctness gives
\[
H(S\mid Aa)=H(S\mid Ab)=0.
\]
Moreover, \(Aab\) is authorised by monotonicity, so
\(H(S\mid Aab)=0\).  Therefore
\[
I(a;b\mid A)-I(a;b\mid A,S)=H(S).
\]
Finally, \(I(a;b\mid A,S)\geq0\), and hence $I(a;b\mid A)\geq H(S)$.
\end{proof}

\section{A matching-selector access structure}

Fix $k\ge3$, and take all indices in $\Zk$.  The participant set is
\[
 \cP_k=\{X_0,\ldots,X_{k-1},Y_0,\ldots,Y_{k-1},
                 h_0,\ldots,h_{k-1}\}.
\]
For each $s\in\Zk$, define the perfect matching
\[
 M_s=\bigl\{\{Y_i,X_{i+s}\}:i\in\Zk\bigr\}.
\]
A coalition $A$ is authorised in $\Gamma_k$ if and only if there is some
$s\in\Zk$ such that
\begin{equation}
 h_s\in A
 \quad\text{and}\quad
 A\cap\{Y_i,X_{i+s}\}\ne\varnothing
 \quad\text{for every }i\in\Zk.
\end{equation}
In formula notation, the access structure is given by:
\begin{equation}
 \bigvee_{s\in\Zk}\left(
 h_s\wedge\bigwedge_{i\in\Zk}(X_{i+s}\vee Y_i)
 \right).
\end{equation}

The key claim of this paper is the following, which we prove in Section~\ref{s:6} using machinery in Sections~\ref{s:4} and~\ref{s:5}.
\begin{theorem}\label{thm:charged}
Every perfect secret-sharing scheme for $\Gamma_k$ satisfies
\begin{equation}\label{eq:charged}
 \sum_{i=0}^{k-1}\bigl(H(X_i)+H(Y_i)\bigr)
 \ge k(k+1)H(S).
\end{equation}
\end{theorem}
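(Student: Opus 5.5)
The plan is to run a Csirmaz-style chain argument once for each matching $M_s$, and then combine the $k$ resulting inequalities so that the helper terms cancel and the $X,Y$ terms telescope down to singleton entropies.

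\emph{Chains along a single matching.} Fix $s\in\Zk$. For $j\in\Zk$, read the indices below as $0,\dots,k-1$ in order, and let
\[
 C^s_j=\{Y_0,\ldots,Y_{j-1}\}\cup\{X_{j+1+s},\ldots,X_{k-1+s}\},\qquad A^s_j=C^s_jh_s .
\]
The coalition $A^s_j$ covers every edge of $M_s$ except $\{Y_j,X_{j+s}\}$. Any other helper $h_t\notin A^s_j$, so $A^s_j$ is unauthorised, while both $A^s_jY_j$ and $A^s_jX_{j+s}$ are authorised. Lemma~\ref{lem:dealer}(2) then gives
\[
 I(Y_j;X_{j+s}\mid A^s_j)\ge H(S).
\]
Expanding this with \eqref{eq:cmi-four} and noting $C^s_jY_j=C^s_{j+1}X_{j+1+s}$ relates consecutive chain sets. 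I would then sum over $j$, so that most of the four-term expressions telescope. The aim is an inequality of the shape
\[
 \sum_j\bigl[H(h_sC^s_jY_j)+H(h_sC^s_jX_{j+s})-H(h_sC^s_j)-H(h_sC^s_jY_jX_{j+s})\bigr]\ge kH(S).
\]
I also want one extra $H(S)$ from the endpoints of the chain. It comes from Lemma~\ref{lem:dealer}(1) applied to the unauthorised set $C^s_0\cup\{X_s\}$ augmented by $h_s$. In that application the helper is the element whose addition authorises, because any coalition without a helper, in particular the set of all $X$'s and $Y$'s, is unauthorised. Together these should give $k+1$ units of $H(S)$ per $s$.

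\emph{Removing the helper.} Each term above is conditioned on $h_s$. I would use submodularity in the form $H(h_sU)-H(h_sV)\le H(U)-H(V)$ for $V\subseteq U$ to drop $h_s$ from the positive differences. The negative terms should be arranged to come in pairs $H(h_sU)-H(h_sV)$ with $V\subseteq U$, which submodularity bounds by $H(U)-H(V)$. The result should be a helper-free inequality
\[
 \Phi_s\ge (k+1)H(S),
\]
where $\Phi_s$ is a signed combination of entropies of sets of the form $\{Y_0,\dots,Y_{j}\}\cup\{X_{j+s'},\dots\}$.

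\emph{Combining over $s$.} The sets appearing in $\Phi_s$ are the same "prefix of $Y$'s, suffix of $X$'s" sets, only shifted by $s$. So summing over $s\in\Zk$ should make the interior set entropies of $\Phi_s$ cancel against those of $\Phi_{s+1}$. By subadditivity $H(U)\le\sum_{a\in U}H(a)$, what remains is bounded by $\sum_i(H(X_i)+H(Y_i))$. This yields \eqref{eq:charged}.

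\emph{Main obstacle.} I expect the hardest step to be helper elimination done compatibly with the cross-$s$ telescoping. The direction of each submodularity step forces which sets survive, and the surviving sets for shift $s$ must match exactly the sets cancelled by shift $s\pm1$. If the naive chain does not align, my first fallback is to choose the chain order in $C^s_j$ differently for each $s$, for example starting the $Y$-prefix at index $-s$ so that the $X$-suffix is independent of $s$. The second fallback is to average over all $k$ cyclic starting points of the chain, paying at most a constant that the $k(k+1)$ target seems designed to absorb.
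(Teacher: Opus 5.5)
Your plan has a gap that no choice of chain order, and no averaging over starting points, can repair: chains conditioned on a single helper can only certify a linear bound. After Lemma~\ref{lem:dealer} is applied, every remaining step you describe is an information inequality in the shares alone: stripping $h_s$ by submodularity, cancelling across $s$, and subadditivity. So the inequality you would end up with, $\sum_{s,j}I(Y_j;X_{j+s}\mid h_sC^s_j)\le\sum_i\bigl(H(X_i)+H(Y_i)\bigr)$, would have to hold for every joint distribution of shares.

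It fails. Let all $X_i,Y_i$ be independent uniform bits and $h_s=(Y_i\oplus X_{i+s})_{i\in\Zk}$. For any $C\subseteq\{X_i,Y_i\}_i$ avoiding $Y_j$ and $X_{j+s}$, the pair $(Y_j,X_{j+s})$ is, conditionally on $h_sC$, uniform on $\{(a,a\oplus c)\}$. So every such chain term equals $1$: your $k^2$ terms sum to $k^2$, while $\sum_i(H(X_i)+H(Y_i))=2k$. Hence any certificate of your type gives at most $2k\,H(S)$.

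Concretely, two of your steps break:
\begin{itemize}
\item The chain sum does not telescope in $j$. The negative sets $h_sC^s_j$ and $h_sC^s_jY_jX_{j+s}$ have sizes $k$ and $k+2$, the positive ones size $k+1$, so nothing cancels.
\item The natural helper elimination just yields $\sum_jH(Y_j\mid C^s_j)\ge kH(S)$. That is one unit per $Y$-share per $s$, with nothing cancelling across $s$.
\end{itemize}
The endpoint term is unusable as well. $H(h_s\mid\mathbf X)\ge H(S)$ bounds a helper's share, and adding independent noise to $h_s$ inflates it without changing any $X$- or $Y$-entropy. So it can never be charged to $\sum_i(H(X_i)+H(Y_i))$.

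What is missing is the paper's mechanism for the $k$-fold gain. There the charged terms are all centred on $X_0$, and their conditioning sets contain \emph{several} helpers at once. The main example is $I(X_0;h_0\mid X_{1:k-1}Y_0h_{1:t})$ for $1\le t\le k-2$, authorised via $h_1$ after adding $X_0$ and via $h_0$ after adding $h_0$. These sit alongside $H(X_0\mid X_{1:k-1}Y_1h_{0:k-2})$, $H(X_0\mid Y_{0:k-2}h_{0:k-1})$ and $I(X_0;Y_0\mid X_{1:k-1}h_0)$. They are padded with $3k-1$ further nonnegative terms into a $4k$-term certificate $\cB\ge(k+1)H(S)$.

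Averaging over the full $4k^2$-element group $\cG$ (shifts, reflections and the $X\leftrightarrow Y$ swaps) makes the helper-laden entropies telescope in $t$ and cancel between the pieces. What remains is the exact identity $\Av(\cB)=2[X_0]=\frac1k\sum_i(H(X_i)+H(Y_i))$. These multi-helper terms escape the obstruction above. In the XOR example, $h_1,h_2$ together with $X_{1:k-1}$ already determine $X_0$, and the charged terms of $\cB$ contribute only $2$ in total, consistent with $2[X_0]=2$.
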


\section{The entropy certificate}
\label{s:4}
Although access-structure indices are taken modulo $k$, the notation
$X_{a:b}$ below means the non-wrapping range
\[
 X_{a:b}=\{X_a,X_{a+1},\ldots,X_b\},
\]
which is empty when $a>b$.  We use the analogous notation for $Y$ and $h$.

Consider the following sum $\cB=\cH+\cC$ of $4k$ nonnegative quantities:
\begin{align}
\cH={}&H(X_0\mid X_{1:k-1}Y_1h_{0:k-2})
      +H(X_0\mid Y_{0:k-2}h_{0:k-1}) \notag\\
 &+I(X_0;h_0\mid X_{1:k-1})
      +I(X_0;h_0\mid Y_{1:k-1}) \notag\\
 &+\sum_{t=1}^{k-2}\!\left(
    I(X_0;h_0\mid X_{1:k-1}Y_0h_{1:t})
   +I(X_0;h_0\mid Y_{1:k-1}h_{1:t})\right) \notag\\
 &+I(X_0;h_0\mid Y_{1:k-1}h_{1:k-1}),                 \tag{H}\label{eq:H}\\[1mm]
\cC={}&\sum_{j=1}^{k-1}I(X_0;X_j\mid X_{j+1:k-1}) \notag\\
 &+\sum_{j=0}^{k-2}I(X_0;Y_0\mid X_{1:j})
   +I(X_0;Y_0\mid X_{1:k-1}h_0).                       \tag{C}\label{eq:C}
\end{align}

We apply Lemma~\ref{lem:dealer} to precisely the following $k+1$ terms;
all remaining terms are bounded only by nonnegativity:
\begin{itemize}
\item the two conditional entropies in the first line of~\eqref{eq:H};
\item the first term in the parenthesis for each $1\le t\le k-2$;
\item the last term of~\eqref{eq:C}.
\end{itemize}
We now check the authorisation premises required for those applications.

The conditioning coalition of the first conditional entropy contains every
$X$ except $X_0$, together with $Y_1$ and
$h_0,\ldots,h_{k-2}$.  For a present $h_s$, the edge of $M_s$ whose
$X$-endpoint is $X_0$ has other endpoint $Y_{-s}$.  This is $Y_1$ only for
the absent helper $h_{k-1}$.  The coalition is therefore unauthorised, while
adding $X_0$ authorises it.  The conditioning coalition of the second
conditional entropy contains no $X$ and misses $Y_{k-1}$, so it is
unauthorised.  After adding $X_0$, it covers $M_1$ and contains $h_1$.

For the first term in the parenthesis indexed by $t$, the conditioning
coalition is $X_{1:k-1}Y_0h_{1:t}$.  For every present $h_s$, the edge
$\{Y_{-s},X_0\}$ is uncovered because $s\ne0$.  Adding $X_0$ authorises the
coalition under $h_1$, and adding $h_0$ also authorises it.  Finally, $X_{1:k-1}h_0$ is unauthorised, while
adjoining either $X_0$ or $Y_0$ completes $M_0$.  Lemma~\ref{lem:dealer}
therefore gives
\begin{equation}\label{eq:seed}
 \cB\ge(k+1)H(S).
\end{equation}

\section{Symmetries and orbit notation}
\label{s:5}
For $\varepsilon\in\{+1,-1\}$ and $p,q\in\Zk$, define
\begin{align}
 D_{\varepsilon,p,q}:\quad
 &X_i\mapsto X_{\varepsilon i+p},\qquad
  Y_i\mapsto Y_{\varepsilon i+q},\qquad
  h_s\mapsto h_{\varepsilon s+p-q},                         \tag{D}\label{eq:D}\\
 W_{\varepsilon,p,q}:\quad
 &X_i\mapsto Y_{\varepsilon i+p},\qquad
  Y_i\mapsto X_{\varepsilon i+q},\qquad
  h_s\mapsto h_{-\varepsilon s-p+q}.                        \tag{W}\label{eq:W}
\end{align}

The inverse maps are
\[
 D_{\varepsilon,p,q}^{-1}
   =D_{\varepsilon,-\varepsilon p,-\varepsilon q},
 \qquad
 W_{\varepsilon,p,q}^{-1}
   =W_{\varepsilon,-\varepsilon q,-\varepsilon p}.
\]

We argue that these maps preserve authorisation.  Indeed,
\(D_{\varepsilon,p,q}\) sends \(h_s\) to
\(h_{\varepsilon s+p-q}\) and sends every edge
\[
 \{Y_i,X_{i+s}\}\in M_s
\]
to an edge of \(M_{\varepsilon s+p-q}\).  Similarly,
\(W_{\varepsilon,p,q}\) sends \(h_s\) to
\(h_{-\varepsilon s-p+q}\) and sends every edge of \(M_s\) to an edge of
\(M_{-\varepsilon s-p+q}\).  Consequently, if \(A\) contains \(h_s\) and
covers \(M_s\), then \(g(A)\) contains the corresponding helper and covers
the corresponding matching, so \(g(A)\) is authorised.  Applying the same
argument to \(g^{-1}\) gives the converse.  Hence
\[
 A\text{ is authorised}
 \quad\Longleftrightarrow\quad
 g(A)\text{ is authorised}
\]
for every displayed map \(g\); in particular, unauthorised coalitions are
also preserved.

For $k\ge3$, these $4k^2$ maps are distinct: whether a map preserves or
exchanges the $X$- and $Y$-classes determines whether it is a $D$ or a $W$;
the images of $X_0,Y_0$ determine $p,q$; and the image of $X_1$ determines
$\varepsilon$, since $1\not\equiv-1\pmod k$.  They form a group $\cG$, as
is also seen from
\begin{align*}
 D_{a,p,q}D_{b,r,t}&=D_{ab,ar+p,at+q},&
 D_{a,p,q}W_{b,r,t}&=W_{ab,ar+q,at+p},\\
 W_{a,p,q}D_{b,r,t}&=W_{ab,ar+p,at+q},&
 W_{a,p,q}W_{b,r,t}&=D_{ab,ar+q,at+p}.
\end{align*}

If \(F\) is a linear expression in joint entropies, let \(gF\) denote the
expression obtained by replacing every participant and coalition appearing
in \(F\) by its image under \(g\).  Define the group average of \(F\) by
\[
 \operatorname{Av}(F)
   :=\frac1{|\mathcal G|}\sum_{g\in\mathcal G}gF.
\]
In particular, for a coalition \(A\) we use the
abbreviation
\[
 [A]
   :=\operatorname{Av}(H(A))
   =\frac1{|\mathcal G|}\sum_{g\in\mathcal G}H(gA).
\]

Multiplication by any fixed \(g_0\in\mathcal G\) merely permutes the terms
in the group average.  Consequently,
\[
 \operatorname{Av}(g_0F)=\operatorname{Av}(F)
 \quad\text{and, in particular,}\quad
 [g_0A]=[A].
\]

Finally, the proof of~\eqref{eq:seed} applies to $g\cB$: the map $g$ preserves every
authorisation and nonauthorisation premise used above, while all other terms
remain nonnegative. 

Therefore, for every \(g\in\mathcal G\),
\begin{equation}\label{eq:all-images}
 g\mathcal B\geq(k+1)H(S),
 \qquad\text{and hence}\qquad
 \operatorname{Av}(\mathcal B)\geq(k+1)H(S).
\end{equation}

\section{Tracking cancellations of the averaged certificate}
\label{s:6}
Set
\[
 P=X_{0:k-2},\qquad \mathbf X=X_{0:k-1},\qquad
 X^-=X_{1:k-1},\qquad Y^-=Y_{1:k-1},
\]
\[
 R=h_{0:k-2},\qquad \mathbf h=h_{0:k-1}.
\]
Split~\eqref{eq:H} into its first two, next two, summed, and final parts,
called $H_1,H_2,H_3,H_4$.  Call the three respective parts of
\eqref{eq:C} $C_1,C_2,C_3$.

We derive the averaged expansion of each part.  Whenever we use
$[A]=[B]$, we justify the equality by giving a mapping $g$ such that $B=gA$.

\paragraph{The term $C_1$.}
The chain rule gives
\[
 \sum_{j=1}^{k-1}I(X_0;X_j\mid X_{j+1:k-1})
 =I(X_0;X^-).
\]
Expanding with~\eqref{eq:cmi-four}, and using
$[X^-]=[P]$ via $D_{1,-1,-1}$, gives
\begin{equation}\label{eq:C1}
 \Av(C_1)=[X_0]+[P]-[\varnothing]-[\mathbf X].
\end{equation}

\paragraph{The term $C_2$.}

Put \(A_j=X_{1:j}\), so that \(A_0=\varnothing\) and
\(A_{k-1}=X^-\).  Replacing each conditional mutual information by
\[
 I(U;V\mid W)=H(WU)+H(WV)-H(W)-H(WUV)
\]
gives
\begin{align*}
 \Av(C_2)
  =\sum_{j=0}^{k-2}
   \bigl(
     [A_jX_0]+[A_jY_0]-[A_j]-[A_jX_0Y_0]
   \bigr).
\end{align*}
For every \(0\leq j\leq k-2\), the following symmetry relations hold:
\[
\begin{aligned}
 [A_jX_0]&=[A_{j+1}]
     &&\text{via }D_{1,1,1},\\
 [A_jX_0Y_0]&=[A_{j+1}Y_0]
     &&\text{via }D_{1,1,0}.
\end{aligned}
\]
Substituting these relations makes the two resulting sums telescope:
\begin{align*}
 \Av(C_2)
 &=\sum_{j=0}^{k-2}
      \bigl([A_{j+1}]-[A_j]\bigr)
   +\sum_{j=0}^{k-2}
      \bigl([A_jY_0]-[A_{j+1}Y_0]\bigr)\\
 &=[A_{k-1}]-[A_0]+[A_0Y_0]-[A_{k-1}Y_0]\\
 &=[X^-]-[\varnothing]+[Y_0]-[X^-Y_0]\\
 &=[P]+[X_0]-[PY_0].
\end{align*}
The last equality uses
\[
\begin{aligned}
 [X^-]&=[P]       &&\text{via }D_{1,-1,-1},\\
 [Y_0]&=[X_0]     &&\text{via }W_{1,0,0},\\
 [X^-Y_0]&=[PY_0] &&\text{via }D_{1,-1,0},
\end{aligned}
\qquad\text{and}\qquad
[\varnothing]=0.
\]
Therefore
\begin{equation}\label{eq:C2}
 \Av(C_2)=[X_0]+[P]-[PY_0].
\end{equation}

\paragraph{The term $C_3$.}
Direct expansion gives four terms.  Using
$[X^-h_0]=[Ph_0]$ via $D_{1,-1,-1}$,
\begin{equation}\label{eq:C3}
 \Av(C_3)=-[Ph_0]+[\mathbf Xh_0]+[X^-Y_0h_0]
           -[\mathbf XY_0h_0].
\end{equation}

\paragraph{The term \(H_1\).}
The expression \(H_1\) is the sum of the first two conditional entropies
in~\eqref{eq:H}:
\[
 H_1
 =H(X_0\mid X^-Y_1R)
  +H(X_0\mid Y_{0:k-2}\mathbf h).
\]
Replacing each conditional entropy by
\[
 H(U\mid W)=H(UW)-H(W)
\]
gives
\begin{align*}
 \Av(H_1)
  ={}&[\mathbf XY_1R]-[X^-Y_1R]\\
    &+[X_0Y_{0:k-2}\mathbf h]
      -[Y_{0:k-2}\mathbf h].
\end{align*}
The required symmetry relations are
\[
\begin{aligned}
 [\mathbf XY_1R]&=[\mathbf XY_0R]
     &&\text{via }D_{1,-1,-1},\\
 [X^-Y_1R]&=[PY_0R]
     &&\text{via }D_{1,-1,-1},\\
 [X_0Y_{0:k-2}\mathbf h]&=[PY_0\mathbf h]
     &&\text{via }W_{1,0,0},\\
 [Y_{0:k-2}\mathbf h]&=[P\mathbf h]
     &&\text{via }W_{1,0,0}.
\end{aligned}
\]
Substituting these relations yields
\begin{equation}\label{eq:H1}
 \Av(H_1)
 =-[PY_0R]+[\mathbf XY_0R]
  -[P\mathbf h]+[PY_0\mathbf h].
\end{equation}

\paragraph{The term $H_1$.}
The two conditional entropies expand into four terms.  The required
orbit identities are
\begin{align*}
 [\mathbf XY_1R]&=[\mathbf XY_0R] &&\text{via }D_{1,-1,-1},\\
 [X^-Y_1R]&=[PY_0R] &&\text{via }D_{1,-1,-1},\\
 [X_0Y_{0:k-2}\mathbf h]&=[PY_0\mathbf h]
     &&\text{via }W_{1,0,0},\\
 [Y_{0:k-2}\mathbf h]&=[P\mathbf h]
     &&\text{via }W_{1,0,0}.
\end{align*}
Therefore
\begin{equation}\label{eq:H1}
 \Av(H_1)=-[PY_0R]+[\mathbf XY_0R]-[P\mathbf h]+[PY_0\mathbf h].
\end{equation}

\paragraph{The term \(H_2\).}
The expression \(H_2\) is the sum of the next two conditional mutual
informations in~\eqref{eq:H}:
\[
 H_2
 =I(X_0;h_0\mid X^-)+I(X_0;h_0\mid Y^-).
\]
Replacing each conditional mutual information by
\[
 I(U;V\mid W)
 =H(WU)+H(WV)-H(W)-H(WUV)
\]
gives
\begin{align*}
 \Av(H_2)
  ={}&[\mathbf X]+[X^-h_0]-[X^-]-[\mathbf Xh_0]\\
    &+[X_0Y^-]+[Y^-h_0]-[Y^-]-[X_0Y^-h_0].
\end{align*}
The required symmetry relations are
\[
\begin{aligned}
 [X^-]&=[P]
     &&\text{via }D_{1,-1,-1},\\
 [X^-h_0]&=[Ph_0]
     &&\text{via }D_{1,-1,-1},\\
 [Y^-]&=[P]
     &&\text{via }W_{-1,-1,-1},\\
 [Y^-h_0]&=[Ph_0]
     &&\text{via }W_{-1,-1,-1},\\
 [X_0Y^-]&=[PY_0]
     &&\text{via }W_{1,0,-1},\\
 [X_0Y^-h_0]&=[X^-Y_0h_0]
     &&\text{via }W_{1,0,0}.
\end{aligned}
\]
Substituting these relations yields
\begin{align*}
 \Av(H_2)
  ={}&[\mathbf X]+[Ph_0]-[P]-[\mathbf Xh_0]\\
    &+[PY_0]+[Ph_0]-[P]-[X^-Y_0h_0].
\end{align*}
Therefore
\begin{equation}\label{eq:H2}
 \Av(H_2)
 =-2[P]+[\mathbf X]+[PY_0]+2[Ph_0]
  -[\mathbf Xh_0]-[X^-Y_0h_0].
\end{equation}

\paragraph{The term \(H_3\).}
For \(1\leq t\leq k-2\), put \(K_t=h_{1:t}\), and let \(H_{3,t}\)
denote the two summands of \(H_3\) indexed by \(t\):
\[
 H_{3,t}
 =I(X_0;h_0\mid X^-Y_0K_t)
  +I(X_0;h_0\mid Y^-K_t).
\]
Expanding both conditional mutual informations gives
\begin{align*}
 \Av(H_{3,t})
 ={}&[\mathbf XY_0K_t]+[X^-Y_0h_0K_t]
       -[X^-Y_0K_t]-[\mathbf XY_0h_0K_t]\\
   &+[X_0Y^-K_t]+[Y^-h_0K_t]
       -[Y^-K_t]-[X_0Y^-h_0K_t].
\end{align*}
Two pairs cancel because
\[
\begin{aligned}
 [X^-Y_0h_0K_t]&=[X_0Y^-h_0K_t]
     &&\text{via }W_{-1,0,0},\\
 [X^-Y_0K_t]&=[X_0Y^-K_t]
     &&\text{via }W_{-1,0,0}.
\end{aligned}
\]
The four terms that remain satisfy
\[
\begin{aligned}
 [\mathbf XY_0K_t]
   &=[\mathbf XY_0h_{0:t-1}]
     &&\text{via }D_{1,-1,0},\\
 [\mathbf XY_0h_0K_t]
   &=[\mathbf XY_0h_{0:t}]
     &&\text{by literal equality},\\
 [Y^-h_0K_t]
   &=[Ph_{0:t}]
     &&\text{via }W_{-1,-1,-1},\\
 [Y^-K_t]
   &=[Ph_{0:t-1}]
     &&\text{via }W_{-1,0,-1}.
\end{aligned}
\]
Consequently,
\begin{align*}
 \Av(H_{3,t})
  ={}&[\mathbf XY_0h_{0:t-1}]
      -[\mathbf XY_0h_{0:t}]\\
    &+[Ph_{0:t}]-[Ph_{0:t-1}].
\end{align*}
Define
\[
 E_t=[Ph_{0:t}]-[\mathbf XY_0h_{0:t}].
\]
The preceding display is exactly
\[
 \Av(H_{3,t})=E_t-E_{t-1}.
\]
Since \(H_3=\sum_{t=1}^{k-2}H_{3,t}\), these differences telescope:
\begin{align*}
 \Av(H_3)
 &=\sum_{t=1}^{k-2}(E_t-E_{t-1})\\
 &=E_{k-2}-E_0\\
 &=\bigl([PR]-[\mathbf XY_0R]\bigr)
   -\bigl([Ph_0]-[\mathbf XY_0h_0]\bigr).
\end{align*}
Therefore
\begin{equation}\label{eq:H3}
 \Av(H_3)
 =-[Ph_0]+[\mathbf XY_0h_0]+[PR]-[\mathbf XY_0R].
\end{equation}

\paragraph{The term \(H_4\).}
The final part of \(\mathcal H\) is
\[
 H_4=I(X_0;h_0\mid Y^-h_{1:k-1}).
\]
Expanding this conditional mutual information gives
\begin{align*}
 \Av(H_4)
  ={}&[X_0Y^-h_{1:k-1}]+[Y^-\mathbf h]\\
    &-[Y^-h_{1:k-1}]-[X_0Y^-\mathbf h].
\end{align*}
The required symmetry relations are
\[
\begin{aligned}
 [X_0Y^-h_{1:k-1}]&=[PY_0R]
     &&\text{via }W_{1,0,-1},\\
 [Y^-\mathbf h]&=[P\mathbf h]
     &&\text{via }W_{1,0,-1},\\
 [Y^-h_{1:k-1}]&=[PR]
     &&\text{via }W_{1,0,-1},\\
 [X_0Y^-\mathbf h]&=[PY_0\mathbf h]
     &&\text{via }W_{1,0,-1}.
\end{aligned}
\]
Substituting these relations yields
\begin{equation}\label{eq:H4}
 \Av(H_4)
 =-[PR]+[PY_0R]+[P\mathbf h]-[PY_0\mathbf h].
\end{equation}

For convenience, the seven averaged expansions are collected below:
\begin{equation}\label{eq:ledger}
\begin{aligned}
 \Av(C_1)
   &=[X_0]+[P]-[\mathbf X],\\
 \Av(C_2)
   &=[X_0]+[P]-[PY_0],\\
 \Av(C_3)
   &=-[Ph_0]+[\mathbf Xh_0]+[X^-Y_0h_0]
      -[\mathbf XY_0h_0],\\[1mm]
 \Av(H_1)
   &=-[PY_0R]+[\mathbf XY_0R]-[P\mathbf h]
      +[PY_0\mathbf h],\\
 \Av(H_2)
   &=-2[P]+[\mathbf X]+[PY_0]+2[Ph_0]
      -[\mathbf Xh_0]-[X^-Y_0h_0],\\
 \Av(H_3)
   &=-[Ph_0]+[\mathbf XY_0h_0]+[PR]
      -[\mathbf XY_0R],\\
 \Av(H_4)
   &=-[PR]+[PY_0R]+[P\mathbf h]
      -[PY_0\mathbf h].
\end{aligned}
\end{equation}

Every entry except the two copies of $[X_0]$ cancels, so
\begin{equation}\label{eq:average-identity}
 \Av(\cB)=2[X_0].
\end{equation}

It remains only to count how often the singleton entropies occur in
\([X_0]\).  Under the maps \(D_{\varepsilon,p,q}\), \(X_0\) is sent to
\(X_p\).  For each fixed \(X_p\), there are two choices of
\(\varepsilon\) and \(k\) choices of \(q\), so \(H(X_p)\) occurs \(2k\)
times.  Under the maps \(W_{\varepsilon,p,q}\), \(X_0\) is sent to \(Y_p\),
and similarly each \(H(Y_p)\) occurs \(2k\) times. Therefore
\begin{equation}\label{eq:singleton-average}
 [X_0]=\frac1{2k}\sum_{i=0}^{k-1}\bigl(H(X_i)+H(Y_i)\bigr).
\end{equation}
Combining~\eqref{eq:all-images}, \eqref{eq:average-identity}, and
\eqref{eq:singleton-average} gives
\[
 \frac1k\sum_{i=0}^{k-1}\bigl(H(X_i)+H(Y_i)\bigr)
 \ge(k+1)H(S),
\]
which proves Theorem~\ref{thm:charged}.

\section{Complexity and bit-length consequences}

We make the complexity measures explicit.  For a perfect scheme
$\Pi=(S,(Z_a)_{a\in\cP})$, define its normalised total and maximum share
entropies by
\[
 \tau(\Pi)=\frac{\sum_{a\in\cP}H(Z_a)}{H(S)},
 \qquad
 \sigma(\Pi)=\frac{\max_{a\in\cP}H(Z_a)}{H(S)}.
\]
For an access structure $\Gamma$, let
\[
 \tau(\Gamma)=\inf_{\Pi\text{ perfect for }\Gamma}\tau(\Pi),
 \qquad
 \sigma(\Gamma)=\inf_{\Pi\text{ perfect for }\Gamma}\sigma(\Pi).
\]

\begin{corollary}\label{cor:complexity}
For $\Gamma_k$, where $n=3k$,
\[
 \tau(\Gamma_k)\ge k(k+2)=\frac{n^2}{9}+\frac{2n}{3},
 \qquad
 \sigma(\Gamma_k)\ge\frac{k+1}{2}=\frac n6+\frac12.
\]
In particular, $\tau(\Gamma_k)=\Omega(n^2)$ and
$\sigma(\Gamma_k)=\Omega(n)$.
\end{corollary}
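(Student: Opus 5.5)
The plan is to combine Theorem~\ref{thm:charged}, which bounds the $X$- and $Y$-shares, with a separate bound of $H(S)$ for each helper share, and then pass to infima. Fix an arbitrary perfect scheme $\Pi$ for $\Gamma_k$. Both claimed bounds will hold for $\tau(\Pi)$ and $\sigma(\Pi)$ uniformly in $\Pi$, so they also hold for the infima $\tau(\Gamma_k)$ and $\sigma(\Gamma_k)$.

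\emph{Helper shares.} For each $s\in\Zk$, let $A=X_{0:k-1}Y_{0:k-1}$ be the coalition of all $X$- and $Y$-participants.
\begin{itemize}
\item $A$ contains no helper, so it is unauthorised.
\item $Ah_s$ contains $h_s$ and covers every edge of $M_s$, so it is authorised.
\end{itemize}
Lemma~\ref{lem:dealer}(1) then gives $H(h_s\mid A)\ge H(S)$. Since conditioning does not increase entropy, $H(h_s)\ge H(S)$. Summing over the $k$ helpers gives $\sum_s H(h_s)\ge kH(S)$.

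\emph{Total share size.} Adding this to~\eqref{eq:charged} yields
\[
 \sum_{a\in\cP_k}H(Z_a)\ge \bigl(k(k+1)+k\bigr)H(S)=k(k+2)H(S).
\]
Hence $\tau(\Pi)\ge k(k+2)$. With $n=3k$ we have $k(k+2)=k^2+2k=n^2/9+2n/3$.

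\emph{Maximum share size.} The left side of~\eqref{eq:charged} is a sum of $2k$ share entropies. Its largest term is therefore at least the average, namely $k(k+1)H(S)/(2k)=\frac{k+1}{2}H(S)$. Thus $\sigma(\Pi)\ge (k+1)/2=n/6+1/2$.

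The $\Omega(n^2)$ and $\Omega(n)$ statements follow immediately. There is no real obstacle once Theorem~\ref{thm:charged} is available. The only points needing care are checking the authorisation premises for the helper bound and noting that the bounds are uniform in $\Pi$, so they survive taking the infimum.
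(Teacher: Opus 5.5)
Your proposal is correct and is essentially the paper's proof. The maximum-share bound comes from averaging the $2k$ charged entropies in Theorem~\ref{thm:charged}, and the total bound adds $H(h_s)\ge H(S)$ for each helper via Lemma~\ref{lem:dealer}. The only difference is cosmetic: you condition on all $X$- and $Y$-participants, whereas the paper conditions on $\mathbf X$ alone. Both coalitions are helper-free and become authorised once $h_s$ is added, so either choice works.
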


\begin{proof}
Theorem~\ref{thm:charged} gives the entropy sum of $2k$
shares and by dividing the sum by the number $2k$ of terms yields that at least one share must have entropy at least $\frac{k+1}{2}H(S)$, which yields the maximum-share bound.  

For $\tau$, we note the
coalition $\mathbf X$ is unauthorised because it contains no helper, while
$\mathbf Xh_s$ is authorised for every $s$.  Lemma~\ref{lem:dealer} then gives
\[
 H(h_s\mid\mathbf X)\ge H(S),
\]
and hence $H(h_s)\ge H(S)$, since conditioning cannot increase entropy.  Adding the \(k\) helper contributions to the charged-share bound finishes the result:
\[
 \sum_{a\in\mathcal P_k}H(a)
 =\sum_{i=0}^{k-1}\bigl(H(X_i)+H(Y_i)\bigr)
   +\sum_{s=0}^{k-1}H(h_s)
 \geq\bigl(k(k+1)+k\bigr)H(S)
 =k(k+2)H(S).
\]
\end{proof}

\paragraph{Alphabet-size bounds.} Suppose the secret is uniform on
a finite alphabet $\mathcal S$, and participant $a$ has a share in a finite
alphabet $\mathcal A_a$.  Then
\[
 H(S)=\log_2|\mathcal S|,
 \qquad H(Z_a)\le\log_2|\mathcal A_a|.
\]
Consequently,
\begin{equation}\label{eq:alphabet-total}
 \frac{\sum_{a\in\cP_k}\log_2|\mathcal A_a|}
      {\log_2|\mathcal S|}
 \ge k(k+2),
\end{equation}
and
\begin{equation}\label{eq:alphabet-max}
 \frac{\max_{a\in\cP_k}\log_2|\mathcal A_a|}
      {\log_2|\mathcal S|}
 \ge\frac{k+1}{2}.
\end{equation}
If the secret is a uniform $m$-bit string, then
$\log_2|\mathcal S|=m$.  If shares are represented by fixed-length binary
strings, participant $a$ requires at least
$\ell_a=\lceil\log_2|\mathcal A_a|\rceil$ bits.  Since
$\ell_a\ge\log_2|\mathcal A_a|$, the exact bit-length analogues are
\[
 \sum_{a\in\cP_k}\ell_a\ge k(k+2)m,
 \qquad
 \max_{a\in\cP_k}\ell_a\ge\frac{k+1}{2}m.
\]

\section*{Acknowledgments}

The primary arguement was constructed with ChatGPT 5.6 Sol, with additional computational exploration from OpenAI's Codex.  The mathematical proof
presented here was independently checked line by line by the author, who takes full responsibility for it.

\bibliographystyle{alpha}
\bibliography{cireferences}

\end{document}